\documentclass[10pt,conference]{IEEEtran}
\usepackage{amsthm,amssymb,amsmath}
\usepackage{graphicx}
\usepackage{subfigure}
\usepackage{authblk}
\usepackage{balance}  
\usepackage[linesnumbered,ruled,vlined]{algorithm2e}
\usepackage[colorlinks, citecolor=blue]{hyperref}

\pagestyle{plain}
\newtheorem{theorem}{Theorem}
\newtheorem{definition}{Definition}


\begin{document}
\title{Integrity Verification for Outsourcing  Uncertain Frequent Itemset Mining}
\author[]{Qiwei Lu}
\author[]{Wenchao Huang}
\author[]{Yan Xiong}
\author[]{Xudong Gong}

\affil[]{ University of Science and Technology of China \authorcr \{luqiwei, lzgxd\}@mail.ustc.edu.cn \authorcr \{yxiong, huangwc\}@ustc.edu.cn}

\maketitle

\begin{abstract}
In recent years, due to the wide applications of uncertain data (e.g., noisy data), uncertain frequent itemsets (UFI) mining over uncertain databases has attracted much attention, which differs from the corresponding deterministic problem from the  generalized definition and resolutions.
As the most costly task in association rule mining process, it has been shown that outsourcing this task to a service provider (e.g.,the third cloud party) brings several benefits to the data owner such as cost relief and a less commitment to storage and computational resources. However, the correctness integrity of mining results can be corrupted if the service provider is with random fault or not honest (e.g., lazy, malicious, etc). Therefore, in this paper, we focus on the integrity and verification issue in UFI mining problem during outsourcing process, i.e., how the data owner verifies the mining results. Specifically, we explore and extend the existing  work on deterministic FI outsourcing verification to uncertain scenario. For this purpose, We extend the existing outsourcing FI mining work to uncertain area w.r.t. the two popular UFI definition criteria and the approximate UFI mining methods. Specifically, We construct and improve the  basic/enhanced verification scheme with such different UFI definition respectively. After that, we further discuss the scenario of existing approximation UFP mining, where we can see that our technique can provide good probabilistic guarantees about the correctness of the verification. Finally, we present the comparisons and analysis on the schemes proposed in this paper. 

\end{abstract}

\section{Introduction}
Association rule mining discovers correlated itemsets that occur frequently in a database and is one of the most popular data mining methods.  The problem can be divided into subproblems in two phases \cite{agrawal1994fast}: (i) computing the set of frequent itemsets(FI), (ii) computing the set of association rules based on the mined frequent itemsets.
We address the significance of FI mining phase from the two aspects below:
\begin{itemize}
  \item \textbf{efficiency.} The  FI mining phase has an exponential time complexity and costly \cite{agrawal1994fast}, while  the latter, rule generation, is computational inexpensive.
  \item \textbf{security and integrity.} The integrity and correctness of the FI mining result shapes the base of the latter  associate rules utility and is the crux of  result utility.
\end{itemize}

The efficiency consideration motivates businesses to outsource the task of FI mining to cloud service providers, who undertake the computation process and finally returns the set of frequent itemsets together with their support counts. It has been shown that such outsourcing computation in cloud computing model can brings a number of benefits such as database update, multiple source data mining scenario.
In order to make  the second aspect, i.e., security and integrity,  practical and satisfactory. The existing research works  contribute to practical  encryption/mapping scheme on data content and mining results security \cite{wong2007security}. Regarding the integrity problem, \cite{Fischer-2006-PhdThesis,wong2009audit} proposed the resolutions to ensure the correctness and completeness of the FI mining results.
We note that the work on security and integrity verification outsourcing FI mining is rare, and challenging.

However, all the works above all contribute on the FI outsourcing verification in the deterministic scenario. In other words, they didn't consider the intrinsic noise and uncertainness of the data to be mined, which should be considered seriously with the increasingly popularity of the uncertain data mining need. Such uncertain data mining and computation research originate and contribute to the practical noise-tolerant scenarios including sensor network monitoring \cite{li2009underground,liu2010passive},  moving object search \cite{chen2005robust,cheng2004querying} and so on.

Unlike the deterministic case,  mining the  uncertain frequent itemset (UFI) is more difficult because the support count has to rely on the existence possibility of the items. In fact, there exist two different semantic explanations on UFI, that is, expected support-based frequent itemset \cite{chui2007mining} and probabilistic frequent itemset \cite{bernecker2009probabilistic}.  It has been shown that, as a generalized form and extension of deterministic FI mining,  UFI mining is more complicated and can be divided into: exact expected support-based UFI mining \cite{aggarwal2009frequent,chui2008decremental,chui2007mining,leung2008tree}, exact possible world semantics (PWS) based UFI mining \cite{bernecker2009probabilistic,sun2010mining},  and approximate Poisson, Normal distribution based UFI mining \cite{calders2010approximation,wang2010accelerating} resolutions.
To our best knowledge, the research on uncertain frequent itemsets(UFI) mining in outsouring environment is still a blank.

The first step towards solving the integrity problem is to understand the behavior of a potentially abnormal service provider that can undermine the integrity of the mining results. A cloud service provider may return inaccurate results if (i) it is honest but sloppy, e.g., there are bugs in its mining programs or suffer from random faults; (ii) it is lazy and tries to reduce costly computation, e.g., it mines only a small portion of the dataset; (iii) it is malicious and purposely returns wrong results, e.g., a business competitor has paid the service provider for providing incorrect results so as to affect the business decisions of the data owner. The concept of result integrity should thus be defined on two criteria:
\begin{itemize}
  \item Correctness: All returned frequent itemsets are actually frequent and their returned support counts are correct.
  \item Completeness: All actual frequent itemsets are included in the result.
\end{itemize}

It is worth mentioning that the completeness aspect verification is based on the correctness aspect, with the help of the maximal FI techniques we can complete the completeness verification similar to \cite{Fischer-2006-PhdThesis}, thus in this paper, our focus is the verification of the correctness. In this paper, we aim to define the soundness of our verification schemes against the two different abnormal levels as follows:
\begin{itemize}
  \item \emph{Random fault/ stupid cloud attack tolerant level}: which verifies the abnormal results caused by the random faults  or the attack raised by stupid cloud service provider who is unaware of the verification mechanism;
  \item \emph{Smart cloud attack tolerant level}: which can verifies the abnormal result returned by the attack raised by smart cloud  service provider who is knowledge of the verification mechanism well.
\end{itemize}
Now we summarize the contribution in this paper as follows:

\begin{enumerate}
  \item We extend the existing outsourcing FI mining work to uncertain area w.r.t. the two popular UFI definition criteria and the approximate UFI mining methods.  Specifically, we design the basic checker mechanism verification for expected support UFI definition to verify the random fault, then propose two enhanced schemes based on the private random weights mechanism for the sake of smart cloud attack verification.
    \item Then we explore and design the basic/enhanced verification scheme for PWS based UFI definition, which is able to verify random fault/smart attack. We address the efficiency difficulty of the private weight based enhancement scheme for PWS case, and explore and reduce the computation complexity by eliminating the enumeration of the possible worlds, which raise the efficiency of the enhanced weighted resolution significantly.
  \item As the bridge of the two different UFI  definition,  we further discuss the case of  approximation UFI mining  verification, where we  reduce the verification to the expected support verification and  provide good probabilistic guarantees about the correctness of  verification.
 Finally, we prove the effectiveness and efficiency of the methods proposed in this paper by extensive experiments on synthetic and real datasets.
\end{enumerate}

The rest of the paper is organized as follows: ...

\section{Related Work}
The problem of secure outsourcing the task of data mining with accurate result is emerging recently. While, most of the  existing research  focus on the security and privacy-preserving aspect. \cite{lin2008releasing} proposed a privacy-preserving outsoucing resolution for SVM training and predicting model based on the approximate algorithm on privacy support vectors. \cite{lin2010privacy} utilized the reduce SVM and random transformation technology to deal with the SVM outsoucring problem, to achieve  privacy and accuracy. \cite{wong2007security} addressed the security issues in outsourcing association rule mining. An item mapping and transaction transformation approach was proposed to encrypt a transactional database and to decrypt the mined association rules returned from a service provider.  We can cite more related works on the privacy and security issue on outsourcing database area such as \cite{cao2011privacy,wong2009secure}, etc.

Unlike the privacy-preserving outsourcing problem for data mining, the integrity and verification problem is more difficult and challenging. Though a number of papers have been published on the integrity  verification of the outsourced database model \cite{hacigumus2002providing}, most of them focus on the traditional database queries such as point and range queries \cite{li2007proof,sion2005query}, which return the original qualified tuples or transactions.  For that purpose,  Merkle trees based approaches can be designed to achieve authentication and verification. On the contrary,  in the outsourced data mining model, query results are composed of statistical aggregations (e.g., itemset counts in association rule mining, centroid computation in clustering). The  popular Merkle trees based technique is thus not applicable.

To our best of knowledge,  there are few work on it.  \cite{wong2009audit} proposed  an audit environment, which consists of a database transformation method and a result verification method based on artificial itemset planting (AIP) technique.  But such method will introduce extra mining burden due to the extra fake planted  database, which will influence the verification confidence. Due to this reason, the author propose an alternative, more robust approach for solving the integrity problem, which is based on an aggregate verification mechanism built on inclusion-and-exclusion principle in \cite{Fischer-2006-PhdThesis}. But both approaches are limited in deterministic world, thus not fit the uncertain case.

\section{Basic Definition and Architecture}
\label{sec:definition}
Assume an uncertain database $T$ with $n$ transactions $\{t_1, t_2,\dots,t_n\}$, composed by the items $\{a,b,c,d,e\}$. Specifically, the $i$th transaction $t_i$ with item appearing probability for $\{a_i,b_i,c_i,d_i,e_i\}$ for $\{a,b,c,d,e\}$. Then if $a_i=0$, the item a doesn't appear in $t_i$; similarly,  the item exists definitely in $t_i$ when $a_i=1$. It is easy to see that  once $\{a_i,b_i,c_i,d_i,e_i\}=0\ or\ 1, \forall 1 \leq i \leq n$, the uncertain  scenario will  turn into the traditional deterministic case. Below we will introduce the definition of the deterministic FI and UFI respectively.

\begin{table}
\begin{center}
\caption{uncertain database with item possibility}
\begin{tabular}{|c|c|}
  \hline
   TID &	Transaction \\
   \hline
   $T_1$ &	$a(a_1)b(b_1)c(c_1)d(d_1)e(e_1)$   \\
 \hline
   $T_2$	& $a(a_2)b(b_2)c(c_2)d(d_2)e(e_2)$	\\
  \hline
  $T_3$&	$a(a_3)b(b_3)c(c_3)d(d_3)e(e_3)$\\
\hline
 $\cdots$ & $\cdots$ $\cdots$ \\
\hline
  $T_n$ &	$a(a_n)b(b_n)c(c_n)d(d_n)e(e_n)$\\
  \hline
\end{tabular}
\end{center}
\end{table}

\subsection{Deterministic  Support-based Frequent  Itemset}
\begin{definition}[{Deterministic FI }]
\label{def:derFI}
Given a minimum  support ratio $min\_sup$, an itemset $X$ is an deterministic support-based frequent itemset if and only if the  support exceeds the minimal support threshold
\begin{equation}
\label{equ:deterFI}
  sup(X)=\sum_{i=1}^n{I_i(X)} \geq n  \cdot min\_sup
\end{equation}
where  $X$ can be an item or composite itemset, and the  function $I_i(X)$ indicates whether transaction $t_i$ contains such itemset $X$.

\begin{equation*}
I_i(X)=\begin{cases}
1 & \text{if}\ X \subseteq t_i \\
0 & \text{otherwise}
\end{cases}
\end{equation*}
\end{definition}

As illustrated before, though the deterministic FI definition is concise and has considerable significant works such as Apriori \cite{agrawal1994fast}, FP-growth \cite{han2000mining} and H-Mine \cite{pei2001h}, it is limited to real life applications.

Below we introduce the two semantic definitions of UFI. The first is  Expected Support-based UFI definition, then comes the Probabilistic World Semantic(PWS) based definition.

\subsection{Uncertain Frequent Itemset(UFI) Definition}
\begin{definition}[Expected Support-based UFI]
\label{def:expUFI}
Given a minimum expected support ratio $min\_esup$, an itemset $X$ is an expected support-based frequent itemset if and only if the expected support
\begin{equation}
\label{equ:expUFI}
  esup(X)=\sum_{i=1}^n{p_i(X)} \geq n  \cdot min\_esup
\end{equation}
where $p_i(X)$ indicated the possibility of the item existence in transaction $t_i$.
Under  the popular assumption  that existence of different items is statistical independent \cite{chui2007mining,bernecker2009probabilistic,calders2010approximation},   the  probability  of   itemset $X$ can be  obtained  by  simply  multiplying  the  individual  item  probabilities $p_i(x)$ as in Equation~\eqref{equ:itemsetProb} below:
\begin{equation*}
\label{equ:itemsetProb}
  p_i(X)=\prod_{x \in X}p_i(x)
\end{equation*}
\end{definition}

We note that the definition in Equation~\eqref{equ:expUFI} can be seen as a natural generalization of the deterministic FI definition in Equation~\eqref{equ:deterFI} with possibility function $p_i(X)$ in Equation~\eqref{equ:itemsetProb} instead of inclusion function $I_i(X)$.

Though the definition of Expected Support-based FI uses the expectation to measure the uncertainty, which is a simply extension of the definition of the frequent itemset in deterministic data and  is known as an important statistic, it cannot show the complete probability distribution. Therefore, the other semantic based definition is offerer in Probabilistic World Semantic(PWS).  Furthermore, most prior researches believe that the two definitions should be studied respectively in \cite{sun2010mining,bernecker2009probabilistic,wang2010accelerating}, we first give the definition of Possible World below.

\begin{definition}[Possible World]
\label{def:possibleworld}
A possible world $\mathcal{W}$ is a deterministic sample or subseteq of uncertain database $T$, whose possible denoted as $P_T(W)$ and obtained by the multiply of the presence and absence possibility according to the item independence assumption.
\begin{equation}
\label{equ:probPWS}
  P_T(W)= \prod_{I_{W}(x,i)=1}p_i(x)\cdot \prod_{I_{W}(x,i)=0}(1-p_i(x))
\end{equation}

where function $I_{W}(x,i)$ indicates whether the item $x$ in transaction $t_i$  is contained in world $\mathcal{W}$.

\begin{equation*}
I_W(x,i)=\begin{cases}
1 & \text{if}\ x \subseteq t_i \in \mathcal{W} \\
0 & \text{otherwise}
\end{cases}
\end{equation*}
\end{definition}
Now we use notation $W_T$ to represent the possible world set of uncertain database $T$ and introduce the PWS based UFI definition as follows.

\begin{definition}[Probabilistic World Semantic(PWS) based UFI]
\label{def:pwsUFI}
Given a minimum support ratio $min\_sup$, and a probabilistic frequent threshold $pft$, an itemset $X$ is a probabilistic frequent itemset if and only if its frequent probability, obtained by the sum of qualified possible worlds possibilities, denoted as $Pr(X)$ satisfies \footnote{we use the \textbf{Iversion} symbol in book of  Concrete Mathematics for the elegancy of equation }:
\begin{equation}
\begin{split}
    Pr(X)&=Pr(sup(X) \geq min\_sup)\\
         &=\sum_{W \in W_T} P_W(T) \cdot [sup(X,W)\geq min\_sup]\geq pft
\end{split}
\end{equation}

\end{definition}

\subsection{Architecture of verification framework}
\begin{figure*}
  \centering
  \subfigure[pre-mining architecture based on extra artificial database]{
    \label{fig:architecture:a} 
    \includegraphics[width=.4\textwidth]{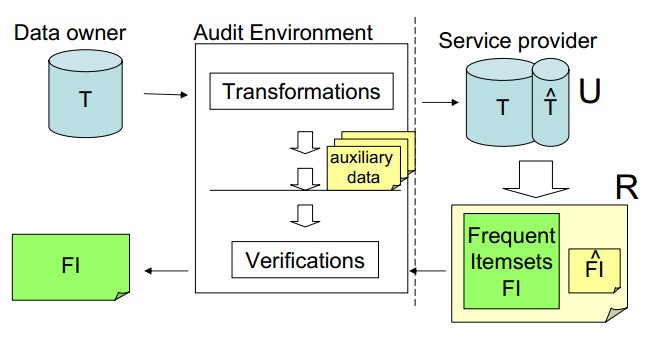}}
  \subfigure[post-mining architecture with checker mechanism ]{
    \label{fig:architecture:b} 
    \includegraphics[width=.35\textwidth]{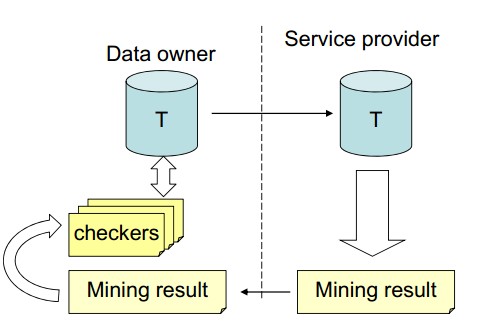}}
  \caption{Two possible architectures for integrity verification of outsourcing frequent itemset mining}
  \label{fig:architecture} 
\end{figure*}
In this section, we will address the architecture of the verification framework. We summarize the two possible  choices of verification resolution in this paper as in Figure~\ref{fig:architecture}. Now we illustrate the two choices respectively according to the .

\textbf{The pre-mining architecture.} As in Figure~\ref{fig:architecture:a}, the  audit and verification environment consists of a database transformation method and a result verification method based on artificial fake database plant technique.  Because the construction of the environment is before the actual computation of FIs, thus it belongs to the pre-mining architecture according to \cite{Fischer-2006-PhdThesis,wong2009audit}.
Intuitively, the method following Figure~\ref{fig:architecture:a} will introduce extra computation burden due to the extra fake  database, which will be the crux and influence the verification confidence.

\textbf{The post-mining architecture.} Due to this reason, the other verification choice in Figure~\ref{fig:architecture:b} can be utilized to avoid such bottleneck, where the architecture use some specific checker mechanism to achieve verification. Obviously, the checker mechanism preforms right after the mining phase, thus it can be seen as the post-mining architecture. Specifically, for FI mining, \cite{Fischer-2006-PhdThesis} proposed a representative   based on the inclusion-and-exclusion principle.

Here it is worth mentioning that the two choices of integrity verification are both applicable to deterministic and uncertain data mining cases, thus general to all algorithms. But there is no paper working on the uncertain scenario and problems. Specifically, for FIs mining, only exist \cite{Fischer-2006-PhdThesis,wong2009audit}  for deterministic discussion. In this paper, we will follow the way of post-mining mechanism based on checker mechanism and explore the verification resolution for UFI mining.

\section{Deterministic FI Mining Verification}
\label{sec:deterministic}
In order to verify every returned frequent itemset, we will  create a count checker for each maximal itemset $X$(which is frequent itself, but all
its supersets are not frequent). This method is expected to be efficient  since the number of maximal itemsets $X$ is expected to be much smaller than the size of itemsets number. In this section, we will illustrate the existing FI verification scheme in \cite{Fischer-2006-PhdThesis}.
\subsection{Basic count checker verification}
\begin{definition}[{Count checker }]
\label{def:countchecker}
Let X be an itemset, referred to as the checkset. The count checker of X, denoted by $X^*.cnt$, is defined as the total support count of its non-empty subsets, that is,  $\mid \{t \mid t \in T \ and \ t \cap X \neq \varnothing \}\mid$.
\end{definition}

In the deterministic FI mining  scenario, for single item FI $X$ such as  $B, C$, we get $X^*.cnt=X.cnt=sup(X)$, thus $B^*.cnt=sup(B)$, $C^*.cnt=sup(C)$. For example, if the database $T$ contains 5 transactions, that is,  $T=\{ABC, AB, ABD, CD, AD\}$, then $\{B\}^*.cnt=3$, $\{C\}^*.cnt=2$. The complex itemset case can be computed by the one scan of the database $T$ according to the definition in Definition~\ref{def:countchecker}. Hence $\{BC\}^*.cnt = 4$ in the example above.

The other way is to compute the count checker of complex itemset is by inclusion-exclusion principe. Take $X=BC$ as an example, we define  $sup(B \cup C)$ as the support count of the any subset of $BC$, which can be thus  $\{BC\}^*.cnt=sup(B \cup C)= sup(C) + sup(B)- sup(BC)=2+3-1=4$, thus  the verification scheme can be done by one scan complexity $\mathcal{O}(n)$ as summarized in Theorem~\ref{the:basicdeterverify} below.
\begin{theorem}[{deterministic FI  checker verification}]
 \label{the:basicdeterverify}
In deterministic database outsourcing mining,  the verification based on  inclusion-exclusion principe count checker is effective, that is to say,
$$X^*.cnt=\sum_{i=1}^{\mid X \mid}\sum_{Y \subseteq X, \mid Y \mid =i}(-1)^{i+1}sup(Y)$$
\end{theorem}
\begin{proof}
The proof of Theorem~\ref{the:basicdeterverify} can be achieved by the inclusion-exclusion principle on the subset of $X$, thus we neglect the details here.
\end{proof}

Consider the previous example, where $\{BC\}^*.cnt = 4$ is obtained by scanning the database and assume that cloud returns $sup(B)=3$, $sup(C)=2$ and $sup(BC)=2$. Computing  $\{BC\}^*.cnt=sup(B \cup C)= sup(B) + sup(C)- sup(BC)=3+2-2=3$ using according to the equation above, the result  is different from the real checker $\{BC\}^*.cnt = 4$. As a consequence, we can know that cloud party is not honest since the verification detects such fraud. We note that the basic checker verification mechanism is can detect the random fault/stupid cloud attack, that is to say, if an honest/stupid cloud returns the result, the basic verification scheme can detect it. However, once a smart cloud aware of this mechanism and returns  $sup(B)=3$, $sup(C)=2$ and $sup(BC)=1$, then the detection will fail as   $\{BC\}^*.cnt=sup(B \cup C)= sup(B) + sup(C)- sup(BC)=3+2-1=4$, which will make the scheme unsetting.

\textbf{\emph{Obviously,  the  verification method in Theorem~\ref{the:basicdeterverify} will not bring into any extra computation cost for the cloud party, therefore, it is efficient and can detect random fault/stupid cloud attack abnormal.
}}

\subsection{Attack-Resistant Checkers with Private Weights}

. In order to prevent such attacks, the checker should include private information that is only known to the database owner, so that cloud party cannot reproduce the same aggregate result without knowing the private parameters.

The parameters impose a virtual database transformation; the data owner applies the verification on the transformed database $T'$ instead of the original $T$. $T'$ is generated by replicating each transaction $t \in T$ by a weight $w_t$, which is determined by the private parameters. We stress that the transformation is never actually performed or materialized; instead the mechanism only assumes that this transformation has been done but uses the mining result from cloud service provider and the original database $T$ to complete the verification.

We now discuss the details of the transformation and the adapted checking mechanism. We assign a random weight $w_x$ on each item $x$. The set of item
weights constitute the private information held by data owner itself. Now, we assume that the database is not $T$, but a transformed database $T'$, such that for each transaction $t \in T$, there are $w_t$ transactions in $T'$ , where $w_t = \prod_{x \in t} w_x$. For example, if  $w_A = 6$, $w_B = 4$, transaction $\{AB\}$ has a weight of $w_A \times  w_B = 24$, thus it appears 24 times in $T'$.

For example, if the original database   $T=\{AB, A, B\}$, then $sup(A)=sup(B)=2$, $sup(AB)=1$, $\{AB\}^*.cnt=3$. According to the private mechanism, without loss of generality, we choose the random private weights $w_A=2$, $w_B=3$, then after such virtual transformation, we can note  $T \to T': \{AB \times 6, A\times 2, B \times 3\}$. Now we can compute  the new support counts in the transformed virtual database $T'$ as below:
\begin{equation*}
\begin{cases}
sup(A)_{T'}=w_Aw_Bsup(AB)+w_A(sup(A)-sup(AB))=8 \\
sup(B)_{T'}=w_Aw_Bsup(AB)+w_B(sup(B)-sup(AB))=9\\
sup(AB)_{T'}=w_Aw_B sup(AB)=6\\
\end{cases}
\end{equation*}

And compute the  new count checker $\{AB\}^*.cnt=w_A+w_B+w_Aw_B=11$ by one scan of the database $T'$. Then the verification equation can be established as
\begin{align*}
\begin{split}
\{AB\}^*.cnt&= sup(A)_{T'}+sup(B)_{T'}-sup(AB)_{T'}\\
&=w_Asup(A)+w_Bsup(B)+(w_Aw_B\\
& \quad -w_A-w_B)sup(AB)\\
&=8+9-6=11
    \end{split}&
\end{align*}

Now we can see that the verification don't require the actual transformation from $T$ to $T'$,  instead  we can  assumes that done but utilize the support count from the returned database $T$ to complete the new checker verification above.

In fact, for any frequent itemset $X$,  we can  complete the  computation of $X^*.cnt_{T'}$ on the transformed database $T'$ with one scan. Furthermore,s we can  follow a similar procedure as Theorem~\ref{the:basicdeterverify} to complete the verification based on the inclusion-exclusion principle in Theorem~\ref{the:privatedeterverify}.

\begin{theorem}[{deterministic private weights based checker verification}]
 \label{the:privatedeterverify}
In deterministic database outsourcing mining,  the verification based on  inclusion-exclusion principe private  count checker is effective, that is to say,
$$X^*.cnt_{T'}=\sum_{Y \subseteq X, \mid Y \mid > 0} (\sum_{Z \subseteq Y, \mid Z \mid >0}(-1)^{\mid Y-Z\mid} \prod_{z\in Z}w_z)sup(Y)_T$$
\end{theorem}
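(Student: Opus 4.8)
The plan is to derive the formula for $X^*.cnt_{T'}$ by combining the count-checker definition with the inclusion-exclusion argument of Theorem~\ref{the:basicdeterverify}, but now applied to the transformed database $T'$ and tracking the weights carefully. First I would record what the count checker means in $T'$: by Definition~\ref{def:countchecker}, $X^*.cnt_{T'}$ is the weighted count of transactions $t \in T$ with $t \cap X \neq \varnothing$, where transaction $t$ contributes its multiplicity $w_t = \prod_{x \in t} w_x$. Then, following the proof of Theorem~\ref{the:basicdeterverify} verbatim but over $T'$, we get $X^*.cnt_{T'} = \sum_{Y \subseteq X,\ |Y| > 0} (-1)^{|Y|+1} sup(Y)_{T'}$, since the inclusion-exclusion identity on the subsets of $X$ does not care whether transactions carry weights.

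The crux is then to express $sup(Y)_{T'}$ in terms of the original support counts $sup(\cdot)_T$ that the cloud actually returns. A transaction $t \in T$ with $Y \subseteq t$ contributes $w_t = \prod_{x \in t} w_x$ to $sup(Y)_{T'}$; grouping transactions by which superset $W$ of $Y$ they actually equal on $X$ is awkward, so instead I would pass through the deterministic support counts by another inclusion-exclusion. Concretely, I would show $sup(Y)_{T'} = \sum_{Z \supseteq Y} (\prod_{z \in Z} w_z)\cdot(\text{number of transactions whose intersection with the relevant item set is exactly }Z)$ and then re-expand the "exactly $Z$" counts back into ordinary $sup(\cdot)_T$ terms via Möbius inversion — but it is cleaner to prove the single clean identity $sup(Y)_{T'} = \sum_{Z \subseteq Y,\ |Z|>0} (-1)^{|Y-Z|}\,(\prod_{z \in Z} w_z)\cdot sup(Y)_T$ is \emph{wrong} in general, and instead the correct intermediate claim is that the full double sum telescopes. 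So the right order is: substitute the target right-hand side, swap the order of summation so the outer sum is over $Z$, and collect the coefficient of each $sup(Y)_T$; matching that coefficient against the direct weighted count of transactions meeting $X$ is the verification.

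The main obstacle I anticipate is the bookkeeping of the two nested inclusion-exclusion layers: one layer comes from the checker being a union (subsets $Y$ of $X$), and a second independent layer is needed to convert weighted supports $sup(Y)_{T'}$ — which depend on the full content of each transaction, not just its intersection with $Y$ — into the returned values $sup(Y)_T$. I would handle this by first establishing the per-transaction identity $[\,t \cap X \neq \varnothing\,]\cdot w_t$ summed over $t$, writing $w_t$ restricted to $X$ as $\prod_{x \in t \cap X} w_x$ only after arguing that items outside $X$ factor out uniformly (which they do not, so this is exactly the subtle point and must instead be absorbed into the $sup(Y)_T$ terms), and then invoking the standard identity $\prod_{x \in S}(1 - (1+w_x)) $-style expansion $\prod_{x \in Y}\bigl(\text{something}\bigr) = \sum_{Z \subseteq Y}(-1)^{|Y - Z|}\prod_{z \in Z}(\cdot)$ to collapse the inner sum. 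Once the coefficient of $sup(Y)_T$ is shown to equal $\sum_{Z \subseteq Y,\,|Z|>0}(-1)^{|Y-Z|}\prod_{z\in Z} w_z$, the theorem follows. A sanity check against the worked example in the text ($T = \{AB, A, B\}$, $w_A = 2$, $w_B = 3$, giving $\{AB\}^*.cnt_{T'} = 11$) should be performed to confirm the signs and index ranges before declaring the proof complete.
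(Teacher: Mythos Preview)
The paper does not actually prove this theorem; its entire proof is the sentence ``The proof of Theorem~\ref{the:privatedeterverify} can be referenced to \cite{Fischer-2006-PhdThesis}.'' So there is no argument in the paper to compare against, and your two-layer inclusion--exclusion plan (one layer from the checker being a union over subsets of $X$, a second from expressing weighted supports back in terms of the unweighted $sup(Y)_T$) is the natural route and is how the cited source proceeds.

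That said, there is a genuine gap at exactly the spot you label ``the subtle point.'' You correctly observe that $w_t=\prod_{x\in t}w_x$ depends on items outside the checkset $X$, and you propose to ``absorb'' that dependence into the $sup(Y)_T$ terms. This cannot work: $sup(Y)_T$ is a bare count and carries no information about which items outside $X$ occur in each transaction, so no linear combination of the $sup(Y)_T$ for $Y\subseteq X$ can recover a quantity that depends on those outside items. Concretely, take $T=\{ABC,\,A,\,B\}$, $X=\{A,B\}$, $w_A=2$, $w_B=3$, $w_C=5$: the left-hand side is $30+2+3=35$, while the right-hand side of the stated formula gives $2\cdot 2+3\cdot 2+(w_Aw_B-w_A-w_B)\cdot 1=11$. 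The identity is therefore only valid under the implicit convention that items outside the current checkset carry weight $1$, i.e.\ that effectively $w_t=\prod_{x\in t\cap X}w_x$ when verifying $X$. Once you make that hypothesis explicit, your plan goes through cleanly: group transactions by $S=t\cap X$, write $X^*.cnt_{T'}=\sum_{\varnothing\neq S\subseteq X}\bigl(\prod_{x\in S}w_x\bigr)N_S$ with $N_S=|\{t:t\cap X=S\}|$, invert via $N_S=\sum_{S\subseteq Y\subseteq X}(-1)^{|Y\setminus S|}\,sup(Y)_T$, and swap the two sums to read off the coefficient of $sup(Y)_T$. Your sanity check on the $\{AB\}$ example is good, but you should state and use the weight-$1$ assumption rather than try to absorb a dependence that is genuinely not absorbable.
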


\begin{proof}
  The proof of Theorem~\ref{the:privatedeterverify} can be referenced to \cite{Fischer-2006-PhdThesis}.
\end{proof}

\textbf{\emph{Similarly, the private weights based verification method in Theorem~\ref{the:privatedeterverify} also will not bring into any extra computation cost for the cloud party and satisfy somewhat good efficiency.
}}
Therefore, we can get the conclusion in this section as \textbf{\emph{Remarks: in deterministic outsourcing FI mining scenario, the basic checker mechanism and its enhanced  private weights based resolution can both be built effectively.
}}

Below we will explore the FI outsourcing mining problem in uncertain scenario, which extends the existing work and is still a blank to our best of knowledge.  As illustrated in Section~\ref{sec:definition}, there exist two different UFI definition, Expected Support-based UFI and PWS based UFI, thus we will explore and address the outsourcing UFI  mining verification  in the later Section~\ref{sec:exp} and Section~\ref{sec:pws} repectively.

\section{ Expected Support-based UFI Verification}
\label{sec:exp}
Due to the deterministic FI mining can be seen as a special case of UFI mining, that is, when the appearance probability  $\{a_i,b_i,c_i,d_i,e_i\}=0\ or\ 1, 1 \leq i \leq n$, the scenario turns into the traditional deterministic database case above. In this section we will follow the Expected Support-based  UFI definition.

\subsection{Basic count checker verification}
Below  we first extend the count checker into the uncertain scenario and propose the \emph{Expected Support-based count checker}.
\begin{definition}[{Expected Support-based count checker}]
\label{def:expcountchecker}
Let X be an itemset, referred to as the checkset. The Expected Support-based count checker of X, denoted by $X^*.ecnt$, is defined as the total probability of its non-empty subset, that is,  $\sum_{t} \{p(t) \mid t \in T \ and \ t \cap X \neq \varnothing \}$.
\end{definition}

Similar to the deterministic FI mining  scenario, for single item itemset $X$ such as  $A, B$, we get $X^*.ecnt=X.ecnt=esup(X)$, thus $A^*.ecnt=esup(A)=\sum_{i=1}^na_i$, $B^*.ecnt=esup(B)=\sum_{i=1}^{n}b_i$.

The complex itemset case can be computed by one scan of the database $T$ according to  Definition~\ref{def:expcountchecker}, specifically, according to the  possibility of the reverse event
\begin{align*}
\begin{split}
\{AB\}^*.ecnt&=esup(A \cup B)=esup(\overline{{A} \cap {B}})\\
&=\sum_{i=1}^n (1-(1-a_i)(1-b_i))\\
&= \sum_{i=1}^n(a_i+b_i-a_ib_i)\\
    \end{split}&
\end{align*}

Similar to the case of deterministic FI mining,  the other way  to compute the checker of $\{AB\}^*.ecnt$ is by inclusion-exclusion principe. Thus the verification can be preformed by one scan complexity $\mathcal{O}(n)$, summarized in Theorem~\ref{the:expbasicverify} below.

\begin{theorem}[basic checker verification for Expected Support-based FI]
 \label{the:expbasicverify}
The verification and checker based on inclusion-exclusion principe can fit the Expected Support-based FI definition, specifically,
$$X^*.ecnt=\sum_{i=1}^{\mid X \mid}\sum_{Y \subseteq X, \mid Y \mid =i}(-1)^{i+1}esup(Y)$$
\end{theorem}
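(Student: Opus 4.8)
The plan is to mirror the argument of Theorem~\ref{the:basicdeterverify}, replacing the deterministic inclusion indicator $I_i(X)$ with the existence probability $p_i(X)$ and the support $sup(\cdot)$ with the expected support $esup(\cdot)$, and then applying the inclusion–exclusion principle termwise over the transactions. Concretely, fix a transaction $t_i$ and let $A_x$ denote the event ``item $x$ occurs in $t_i$'' for each $x \in X$; under the item-independence assumption these events have $\Pr[A_x] = p_i(x)$. The key observation is that the per-transaction contribution to $X^*.ecnt$ is exactly the probability that $t_i$ contains \emph{some} element of $X$, i.e. $\Pr\big[\bigcup_{x \in X} A_x\big]$, since $t_i$ contributes $p(t_i)$-mass precisely to the event $t_i \cap X \neq \varnothing$.

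First I would write $X^*.ecnt = \sum_{i=1}^{n} \Pr\big[\bigcup_{x \in X} A_x\big]$ directly from Definition~\ref{def:expcountchecker}. Next I would apply the inclusion–exclusion formula to the union event inside the sum:
$$\Pr\Big[\bigcup_{x \in X} A_x\Big] = \sum_{i=1}^{|X|} \sum_{\substack{Y \subseteq X,\, |Y| = i}} (-1)^{i+1} \Pr\Big[\bigcap_{x \in Y} A_x\Big].$$
Then I would identify $\Pr\big[\bigcap_{x \in Y} A_x\big] = \prod_{x \in Y} p_i(x) = p_i(Y)$, using independence and the itemset-probability formula from Definition~\ref{def:expUFI}. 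Finally, substituting back and exchanging the (finite) order of summation between the transaction index $i$ and the subset sum gives
$$X^*.ecnt = \sum_{i=1}^{|X|} \sum_{\substack{Y \subseteq X,\, |Y| = i}} (-1)^{i+1} \sum_{j=1}^{n} p_j(Y) = \sum_{i=1}^{|X|} \sum_{\substack{Y \subseteq X,\, |Y| = i}} (-1)^{i+1} esup(Y),$$
which is the claimed identity. I would close with the earlier remark that the right-hand side uses only the expected supports $esup(Y)$ already returned (or derivable) from the cloud and a single $\mathcal{O}(n)$ scan of $T$ to compute the left-hand side, so the verification carries no extra cost for the provider and detects random-fault / stupid-cloud deviations.

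The main obstacle is not any deep step but the careful justification that the ``union of events over a single transaction'' is the right per-transaction quantity — that is, that $X^*.ecnt$ genuinely decomposes as $\sum_i \Pr[t_i \cap X \neq \varnothing]$ rather than something involving joint behaviour across transactions — together with a clean statement of why the independence assumption is exactly what licenses replacing $\Pr[\bigcap_{x\in Y} A_x]$ by $p_i(Y)$. Once the per-transaction reduction is in place, the rest is the standard inclusion–exclusion identity applied inside a finite sum, so I would not belabor it; indeed, as in Theorem~\ref{the:basicdeterverify}, one may simply cite the inclusion–exclusion principle and omit the routine algebra.
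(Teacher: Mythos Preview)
Your proposal is correct and follows essentially the same approach as the paper, which merely sketches the argument by citing the single-item base case $X^*.ecnt = esup(X)$ and then invoking the inclusion--exclusion principle for the general case. Your version is more explicit---writing out the per-transaction decomposition $X^*.ecnt = \sum_i \Pr[t_i \cap X \neq \varnothing]$, applying inclusion--exclusion inside each transaction, and swapping sums---but this is exactly the content the paper leaves implicit.
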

\begin{proof}
  The proof of Theorem~\ref{the:expbasicverify} is similar to Theorem~\ref{the:basicdeterverify}.The base is that for all single item itemset $X$, $X^*.ecnt=X.ecnt=esup(X)$, furthermore, the general complex $X$ can be induced by the inclusion-exclusion principle.
\end{proof}

\textbf{\emph{Obviously, we noted that the basic checker verification for Expected Support-based FI definition in Theorem~\ref{the:expbasicverify} will not bring into any extra computation cost for the cloud party, therefore, it is efficient.
}}

For a simple example, $X=AB$, according to the inclusion-exclusion principle,    we can establish the verification equation:
$\{AB\}^*.ecnt= esup(A) + esup(B) - esup(AB)$.
Here we build a simple database with two transactions and two items $A$ and $B$ in the table below  as $T$ and  assign them different random existence possibilities. According to the definition

\begin{center}
  \begin{tabular}{|c|c|}
  \hline
   TID &	Transaction \\
   \hline
   $T_1$ &	$A(0.5)B(0.6)$   \\
 \hline
   $T_2$	& $A(0.4)B(0.5)$	\\
   \hline
\end{tabular}
\end{center}
 of Expected Support-based FI in Equation~\eqref{equ:expUFI}. We can get the expected support count from the cloud party with  $esup(A)=\sum_{i=1}^na_i=0.9$, $esup(B)=\sum_{i=1}^nb_i=1.1$, $esup(AB)=\sum_{i=1}^na_ib_i=0.5$.  According to the inclusion-exclusion principle, we can compute
\begin{align*}
\begin{split}
    \{AB\}^*.ecnt&= esup(A) + esup(B) - esup(AB)\\
    &=0.9+1.1-0.5=1.5
\end{split}&
\end{align*}

Then according to the one scan computation method, we can recompute and get the verification below
\begin{align*}
\begin{split}
\{AB\}^*.ecnt&=\sum_{i=1}^2(a_i+b_i-a_ib_i)\\
            &=(1-0.5\times 0.4)+(1-0.6\times 0.5)\\
            &=0.8+0.7=1.5\\
            &=esup(A) + esup(B) - esup(AB)
\end{split}&
\end{align*}

\subsection{Enhanced Checkers with Private Weights}
Similar to the discussion in Section~\ref{sec:deterministic}, the basic checker mechanism maybe suffer the smart cloud party attack which will pass the  inclusion-exclusion principle based verification. It is worth noting that the similar problem goes with the basic scheme above for expected-support based UFI computation. Thus we will explore the similar  privacy weighted parameters enhanced scheme  below.

Similarly, we will aim to design such private random parameters based  virtual database transformation. In other words, the data owner applies the verification on the transformed database $T'$ instead of the original $T$.

\subsubsection{\textbf{first scheme}}
First we follow the similar way to set the random private item weights. Then unlike the case of deterministic FI mining, we generate  the $T'$  by item existence possibility scaling operation, specifically,  multiplying the each item existence possibility by weight $w_t$, which is determined by the product of the  private item  weights as $\prod_{x \in t}w_x$. Here we note that we choose the multiplication operation as the virtual transformation operation rather than the replication, which can meet the property of uncertain database well. What's more, the deterministic scenario replication transformation is just a special case of  possibility scaling on integer  possibility.

We now discuss the details of the transformation and the adapted checking mechanism. We assign a random weight $w_x$ on each item $x$.

 Take the previous example, if  $w_A = 0.4$, $w_B = 0.5$, then transaction  $AB$ will  scale up with $M=\prod_{x\in AB}w_x=w_Aw_B=0.4\times 0.5=0.20$, thus the original uncertain database $T$ can be transformed into the virtual form $T'$ below:

\begin{center}
  \begin{tabular}{|c|c|}
  \hline
   TID &	Transaction \\
   \hline
   $T_1$ &	$A(0.5M=0.10)B(0.6M=0.12)$   \\
 \hline
   $T_2$	& $A(0.4M=0.08)B(0.5M=0.10)$	\\
   \hline
\end{tabular}
\end{center}

 Now we can compute  the new expected support counts in the transformed virtual database $T'$ as below:
\begin{equation*}
\begin{cases}
esup(A)_{T'}=0.5M+0.4M=0.9M=Mesup(A)_T \\
esup(B)_{T'}=0.6M+0.5M=1.1M=Mesup(B)_T\\
esup(AB)_{T'}=0.3M^2+0.2M^2=0.5M^2=M^2esup(AB)_T\\
\end{cases}
\end{equation*}

Then we compute the  new count checker by inclusion-exclusion principle
\begin{align*}
\begin{split}
    \{AB\}^*.ecnt&=Mesup(A)_T+Mesup(B)_T -M^2esup(AB)_T\\
    &=0.18+0.22-0.02=0.38\\
\end{split}&
\end{align*}
Through one scan of the database $T'$, the verification equation can be established as

Then according to the one scan computation method, we can recompute and get the verification below
\begin{align*}
\begin{split}
\{AB\}^*.ecnt&=\sum_{i=1}^2(a_i+b_i-a_ib_i)\\
            &=(1-0.9\times 0.88)+(1-0.92\times 0.9)\\
            &=0.38
\end{split}&
\end{align*}

We note that the example above, the transformed existence possibilities in $T'$ for all items are all below 1 due to the choice of $w_A = 0.4$, $w_B = 0.5$. In fact, even when we choose the large such verification will still succeed because the verification equation below isn't related with constant 1.
\begin{align*}
\begin{split}
    \{AB\}^*.ecnt&= esup(A) + esup(B) - esup(AB)\\
    &=\sum_{i=1}^n(a_i+b_i-a_ib_i)
\end{split}&
\end{align*}

Thus $w_A = 4$, $w_B = 5$ which will result in $M=20$ and
  \begin{center}
  \begin{tabular}{|c|c|}
  \hline
   TID &	Transaction \\
   \hline
   $T_1$ &	$A(0.5M=10)B(0.6M=12)$   \\
 \hline
   $T_2$	& $A(0.4M=8)B(0.5M=10)$	\\
   \hline
\end{tabular}
\end{center}
will not break such effectiveness. And the privacy weights choice and virtual transformation can be relaxed and easy.

Thus the verification can be preformed by one scan complexity $\mathcal{O}(n)$, summarized in Theorem~\ref{the:expfirstverify} below.

\begin{theorem}[First private weights based checker verification for Expected Support-based FI]
 \label{the:expfirstverify}
The verification and checker based on inclusion-exclusion principe can fit the Expected Support-based FI definition, specifically,
$$X^*.ecnt=\sum_{i=1}^{\mid X \mid}\sum_{Y \subseteq X, \mid Y \mid =i}(-1)^{i+1}M^iesup(Y)_T$$
where $M=\prod_{x\in I}w_x$, $I$ is the item set.
\end{theorem}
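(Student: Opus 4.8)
The plan is to reduce the claim to the basic checker identity of Theorem~\ref{the:expbasicverify}, but applied to the virtually transformed database $T'$ in place of $T$. Here $X^*.ecnt$ denotes the Expected Support-based count checker of $X$ evaluated on $T'$ (Definition~\ref{def:expcountchecker}), which the data owner never materializes but computes by a single scan; by Theorem~\ref{the:expbasicverify} applied to $T'$ we have at once
\[
X^*.ecnt \;=\; \sum_{i=1}^{\mid X\mid}\ \sum_{Y\subseteq X,\ \mid Y\mid=i}(-1)^{i+1}\,esup(Y)_{T'}.
\]
So it suffices to express $esup(Y)_{T'}$ through $esup(Y)_T$ and substitute.

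First I would record how the scaling transformation acts on expected supports. Under the transformation every item probability $p_j(x)$ of $T$ is replaced by $M\cdot p_j(x)$ in $T'$, where $M=\prod_{x\in I}w_x$ is the common multiplier attached to each transaction (in an uncertain database every transaction nominally ranges over the whole item set $I$ — items absent from $t$ simply have probability $0$ — so the per-transaction weight $w_t=\prod_{x\in t}w_x$ equals $M$ for all $t$). Using the independence assumption and the product rule $p_j(Y)=\prod_{y\in Y}p_j(y)$ of Definition~\ref{def:expUFI}, for any $Y\subseteq X$ with $\mid Y\mid=i$ the per-transaction probability in $T'$ is $\prod_{y\in Y}\bigl(M\,p_j(y)\bigr)=M^{\,i}\,p_j(Y)$. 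Summing over the $n$ transactions yields the scaling relation
\[
esup(Y)_{T'}\;=\;\sum_{j=1}^{n}M^{\,i}\,p_j(Y)\;=\;M^{\,i}\,esup(Y)_T .
\]

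Substituting this into the expansion from the first paragraph, and noting that the inner sum ranges precisely over the $Y$ with $\mid Y\mid=i$, gives
\[
X^*.ecnt\;=\;\sum_{i=1}^{\mid X\mid}\ \sum_{Y\subseteq X,\ \mid Y\mid=i}(-1)^{i+1}M^{\,i}\,esup(Y)_T ,
\]
which is the asserted identity (the worked case $X=AB$ with $w_A=0.4$, $w_B=0.5$, $M=0.2$ is the instance $i\le 2$). As a consistency check one may verify directly that the right-hand side equals the one-scan value $\sum_{j}\bigl(1-\prod_{x\in X}(1-M\,p_j(x))\bigr)$ of $\{X\}^*.ecnt$ on $T'$: expanding that product by inclusion--exclusion reproduces exactly $\sum_{i\ge 1}\sum_{\mid Y\mid=i}(-1)^{i+1}M^{\,i}p_j(Y)$ term by term.

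The only real subtlety I anticipate is the bookkeeping of the exponent of $M$: the argument needs every transaction to carry the same constant multiplier $M$, so that a size-$i$ subset $Y$ accumulates exactly one factor per element — hence $M^{\,i}$ after the scaling — and this factor pulls out of the transaction sum. If the private weights were instead allowed to induce transaction-dependent multipliers, the $M^{\,i}$ would not factor out and one would have to fall back to the fully general, heavier checker of the Theorem~\ref{the:privatedeterverify} type; thus the proof tacitly uses the (natural for uncertain data) convention $w_t=M$ for every $t$.
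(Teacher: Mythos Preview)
Your proposal is correct and follows essentially the same approach as the paper's own proof: establish the scaling relation $esup(Y)_{T'}=M^{\,i}\,esup(Y)_T$ for $\mid Y\mid=i$ and then invoke the inclusion--exclusion identity of Theorem~\ref{the:expbasicverify} on $T'$. The paper's proof is a two-line sketch of exactly this, whereas you spell out the per-transaction scaling and the (necessary) observation that $w_t=M$ is the same constant for every transaction, which is what makes the factor $M^{\,i}$ pull out cleanly.
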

\begin{proof}
The proof of Theorem~\ref{the:expbasicverify} can be completed below. According to the first private weights based verification scheme, the base is that  $\forall  |X|=1$, $X^*.ecnt=X.ecnt_{T'}=M\cdot esup(X)_T$, then  similarly, $\forall |X|=i$, $X.ecnt_{T'}=M^i \cdot esup(X)_T$, Thus the general case $X^*.ecnt$ can be induced by the inclusion-exclusion principle.
\end{proof}

In the example above, $I=\{A,B\}$, $M=w_Aw_B=0.2$,  thus equation $\{AB\}^*.ecnt=Mesup(A)_T+Mesup(B)_T -M^2esup(AB)_T$ can be deduced by Theorem~\ref{the:expfirstverify}.

\subsubsection{\textbf{Second scheme}}
The private weights based checker mechanism is based on multiplying the product of the  private item  weights  $M=\prod_{x \in t}w_x$ in the transaction . The idea is  extended from the deterministic scenario naturally. Now we propose another multiplication based transformation operation on the isolated item possibilities, instead of the whole product of the item weights in the transaction.

Take the previous example, if  $w_A = 0.4$, $w_B = 0.5$, now the original uncertain database $T$ can be transformed into the virtual form $T'$ via multiplying the item possibility by  corresponding private weights instead of the whole weight product $M$   below:

\begin{center}
  \begin{tabular}{|c|c|}
  \hline
   TID &	Transaction \\
   \hline
   $T_1$ &	$A(0.5w_A=0.20)B(0.6w_B=0.30)$   \\
 \hline
   $T_2$	& $A(0.4w_A=0.16)B(0.5w_B=0.25)$	\\
   \hline
\end{tabular}
\end{center}

 Now we can compute  the new expected support counts in the transformed virtual database $T'$ as below:
\begin{equation*}
\begin{cases}
esup(A)_{T'}&=0.5w_A+0.4w_A=0.9w_A=w_Aesup(A)_T \\
esup(B)_{T'}&=0.6w_B+0.5w_B=1.1w_B=w_Besup(B)_T\\
esup(AB)_{T'}&=0.3w_Aw_B+0.2w_Aw_B=0.5w_Aw_B\\
 &=w_Aw_Besup(AB)_T\\
\end{cases}
\end{equation*}

Then we compute the  new count checker by inclusion-exclusion principle
\begin{align*}
\begin{split}
    \{AB\}^*.ecnt&=w_Aesup(A)_T+w_Besup(B)_T -w_Aw_Besup(AB)_T\\
    &=0.36+0.55-0.10=0.81\\
\end{split}&
\end{align*}

Then according to the one scan computation method, we can recompute and get the verification below
\begin{align*}
\begin{split}
\{AB\}^*.ecnt&=\sum_{i=1}^2(a_i+b_i-a_ib_i)\\
            &=(1-0.8\times 0.7)+(1-0.84\times 0.75)\\
            &=0.81
\end{split}&
\end{align*}

Thus the verification can be preformed by one scan complexity $\mathcal{O}(n)$, summarized in Theorem~\ref{the:expsecondverify} below.

\begin{theorem}[Second private weights based checker verification for Expected Support-based FI]
 \label{the:expsecondverify}
The verification and checker based on inclusion-exclusion principe can fit the Expected Support-based FI definition, specifically,
$$X^*.ecnt=\sum_{i=1}^{\mid X \mid}\sum_{ Y \subseteq X, \mid Y \mid =i}(-1)^{i+1}\prod_{z\in Y} w_zesup(Y)_T$$
where $w_z$ is the privacy weight of item $z\in I$, $I$ is the item set.
\end{theorem}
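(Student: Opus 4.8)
The plan is to reduce Theorem~\ref{the:expsecondverify} to the basic checker identity of Theorem~\ref{the:expbasicverify} applied to the transformed database $T'$, exactly as the first-scheme argument reduced Theorem~\ref{the:expfirstverify} to it. First I would make precise what $T'$ is under the second scheme: for every transaction $t_i$ and every item $x$, the existence probability in $T'$ is $w_x\,p_i(x)$ in place of $p_i(x)$ (an item-wise scaling, not the transaction-wide factor $M$ used in the first scheme). Under the standing item-independence assumption, for any non-empty $Y\subseteq X$ the itemset probability in $T'$ factorises as
$$p_i(Y)_{T'}=\prod_{z\in Y}\bigl(w_z\,p_i(z)\bigr)=\Bigl(\prod_{z\in Y}w_z\Bigr)\prod_{z\in Y}p_i(z)=\Bigl(\prod_{z\in Y}w_z\Bigr)p_i(Y)_T,$$
and summing over $i=1,\dots,n$ gives the key scaling relation $esup(Y)_{T'}=\bigl(\prod_{z\in Y}w_z\bigr)esup(Y)_T$, which is precisely what the worked example verified for $Y\in\{A,B,AB\}$.

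Next I would invoke Theorem~\ref{the:expbasicverify} with $T'$ in the role of the database: since the count checker $X^{*}.ecnt$ is the one-scan quantity computed on $T'$, that theorem yields $X^{*}.ecnt=\sum_{i=1}^{|X|}\sum_{Y\subseteq X,\,|Y|=i}(-1)^{i+1}esup(Y)_{T'}$. Substituting the scaling relation term by term rewrites the right-hand side as $\sum_{i=1}^{|X|}\sum_{Y\subseteq X,\,|Y|=i}(-1)^{i+1}\bigl(\prod_{z\in Y}w_z\bigr)esup(Y)_T$, which is the claimed formula. The base case $|X|=1$ is immediate: the only non-empty subset of $\{x\}$ is $\{x\}$ itself, so $X^{*}.ecnt=esup(\{x\})_{T'}=w_x\,esup(\{x\})_T$, matching the formula; the inclusion-exclusion induction then carries it to all larger $X$.

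Two points I would state explicitly rather than grind through. First, the identity is algebraic: the one-scan expression $X^{*}.ecnt=\sum_i\bigl(1-\prod_{x\in X}(1-w_x p_i(x))\bigr)$ expands, by distributing the product, into exactly the alternating sum over non-empty $Y\subseteq X$, so the equality holds even when some $w_x>1$ pushes a scaled ``probability'' $w_x p_i(x)$ outside $[0,1]$ (the same remark already used for the first scheme). Second, only item independence is needed for the factorisation of $p_i(Y)_{T'}$, and nothing else about $T$ is assumed. The one genuinely delicate point — and the only place bookkeeping could go wrong — is that the scaling here is item-wise, so the weight attached to $esup(Y)_T$ is the single product $\prod_{z\in Y}w_z$ over exactly the items of $Y$, in contrast with the $M^{|Y|}$ factor of Theorem~\ref{the:expfirstverify}; keeping that distinction straight while expanding the inclusion-exclusion sum is the main (and modest) obstacle.
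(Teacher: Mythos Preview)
Your proposal is correct and follows essentially the same route as the paper: establish the scaling relation $esup(Y)_{T'}=\bigl(\prod_{z\in Y}w_z\bigr)\,esup(Y)_T$ for each non-empty $Y\subseteq X$ and then apply the inclusion--exclusion identity of Theorem~\ref{the:expbasicverify} on $T'$. The paper's own proof is terser (it just states the base case $|X|=1$, asserts the general scaling $X.ecnt_{T'}=\prod_{z\in X}w_z\cdot esup(X)_T$, and invokes inclusion--exclusion), but your more explicit derivation of the factorisation via item independence and your remark that the identity remains algebraically valid even when $w_x p_i(x)\notin[0,1]$ are welcome clarifications of exactly the same argument.
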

\begin{proof}
The proof of Theorem~\ref{the:expsecondverify} can be completed as follows. According to the second private weights based verification scheme, the base is that  $\forall  |X|=1$, $X^*.ecnt=X.ecnt_{T'}= w_X esup(X)_T$, then  similarly, $\forall |X|=i$, $X.ecnt_{T'}=\prod_{z\in X}w_z \cdot esup(X)_T$, Thus the general case $X^*.ecnt$ can be induced by the inclusion-exclusion principle in Theorem~\ref{the:expsecondverify}.
\end{proof}

In the example above, $I=\{A,B\}$, $w_A=0.4$, $w_B=0.5$, thus  $ \{AB\}^*.ecnt=w_Aesup(A)_T+w_Besup(B)_T -w_Aw_Besup(AB)_T$  can be deduced by Theorem~\ref{the:expsecondverify}.

\textbf{\emph{Similarly, the two private weights based verification method in Theorem~\ref{the:expfirstverify}  and Theorem~\ref{the:expsecondverify} both will not bring into any extra computation cost for cloud party and satisfy good efficiency.
}}
Therefore, we can get the conclusion in this section as \textbf{\emph{Remarks: in expected support-based outsourcing FI mining scenario, there exists effective basic checker mechanism, and its enhanced private weights based resolutions can also be built effectively.
}}


\section{ Probabilistic PWS based UFI Verification}
\label{sec:pws}
Though expected support-based UFI definition utilizes the support expectation to measure the uncertainty of items, which is a simply extension of the  deterministic case and can be finished with somewhat low cost, it cannot show the complete probability distribution.
In order to overcome this problem, Probabilistic World Semantic(PWS) based UFI definition was proposed. However, the characteristics of PWS,  the  combinational relation between the transactions, differs from the previous  simple support sum  and  expected support sum of the FIs when deterministic FI and expected support-based UFI definition.
Now we explore whether we can obtain similar potential checker mechanism with such PWS based UFI definition. Below we can easily get the similar PWS based count checker definition below.

\subsection{Basic count checker verification}

\begin{definition}[{PWS based count checker}]
\label{def:probcountchecker}
Let X be an itemset, referred to as the checkset. The PWS based count checker of X, denoted by $X^*.pcnt$, is defined as the total probability of its non-empty qualified possible worlds w.r.t. to the minimum support ratio min\_sup,
\begin{align}
\begin{split}
X^*.pcnt&=\sum_t Pr[sup(t) \geq \delta \mid t \subset X, t \neq \varnothing])\\
    \end{split}&
\end{align}
where threshold $\delta=n \cdot min\_sup$.
\end{definition}

According to Definition~\ref{def:probcountchecker}, for all the single item itemset $X$ such as  $A, B$, we can get
\begin{equation*}
\begin{cases}
A^*.pcnt=Pr[sup(A) \geq  \delta] \\
B^*.pcnt=Pr[sup(B) \geq  \delta ]\\
\end{cases}
\end{equation*}

In fact, for the general itemset $X$, $X^*.pcnt=X.pcnt=Pr(X)$. Take simple set $AB$ as example,
$$AB.pcnt=Pr[sup(AB) \geq  \delta ]$$

Below we explore the verification of $\{AB\}^*.pcnt$ according to Definition~\ref{def:probcountchecker}.
  \begin{align}
  \label{equ:probbasis}
\begin{split}
\{AB\}^*.pcnt&=Pr[sup(A \cup B) \geq \delta]\\
    &= Pr[sup(A) \geq  \delta \lor sup(B) \geq  \delta ]\\
    \end{split}&
\end{align}

 Now we can deduce from the Equation\eqref{equ:probbasis}  with inclusion-exclusion principle,
 \begin{align}
 \begin{split}
 \label{equ:probinclude}
\{AB\}^*.pcnt&= Pr[sup(A) \geq  \delta]+Pr[sup(B) \geq  \delta ]\\
 & \qquad -Pr[sup(AB) \geq  \delta ]\\
    &=A.pcnt+B.pcnt-AB.pcnt
\end{split}&
\end{align}

We note that the probability of the parts $A.pcnt$, $B.pcnt$, $AB.pcnt$ in Equation~\eqref{equ:probinclude} will be computed and returned by cloud party.
Meanwhile, in the other way, we can  find that $\{AB\}^*.pcnt$ can be obtained by the inverse event probability as in Equation~\eqref{equ:probinverse},
\begin{align}
\begin{split}
\label{equ:probinverse}
\{AB\}^*.pcnt&=Pr[sup(A) \geq  \delta \lor sup(B) \geq  \delta ]\\
    &=  1-Pr[sup(A) <  \delta \land sup(B) <  \delta ]\\
    &=  1-Pr[0  <sup(A) <  \delta \land 0<sup(B) <  \delta ]\\
    &\qquad    -Pr[sup(A)=0 \land sup(B)=0]\\
    \end{split}&
\end{align}

According the similar analysis before, the part $Pr[sup(A)=0\land sup(B)=0]$ can be computed by by one scan complexity $\mathcal{O}(n)$ as $\prod_{i=1}^n(1-a_i)(1-b_i)$.
Then when we combine Equation \eqref{equ:probinclude} and \eqref{equ:probinverse}, we can get
\begin{align}
\begin{split}
\label{equ:pcnt}
\{AB\}^*.pcnt&=A.pcnt+B.pcnt-AB.pcnt\\
    &= 1-Pr[0  <sup(A) <  \delta \land 0<sup(B) <  \delta ]\\
    &\quad -\prod_{i=1}^n (1-a_i)(1-b_i)\\
    \end{split}
\end{align}

Thus once the  part $\lambda=Pr[0  <sup(A) <  \delta\land 0<sup(B) <  \delta ]$ can be  computed and returned along with the frequent itemset probabilities such as $A.pcnt$, $B.pcnt$, $AB.pcnt$ by some specific PWS based UFI mining algorithm, $\{AB\}^*.pcnt$ can be verified  by one scan of the database as in Equation~\eqref{equ:pcnt}. Then we can build the basic checker mechanism for the general itemset $X$ in Theorem~\ref{the:probbasicvefify} below.

\begin{theorem}[{PWS based Probability basic count checker verification}]
 \label{the:probbasicvefify}
In UFI outsourcing mining, the verification is effective with the basic count checker when PWS based Probability FI definition. Specifically, for general itemset $X$,
\begin{align}
\begin{split}
\label{equ:pcnt}
\{X\}^*.pcnt&=\sum_{i=1}^{\mid X \mid}\sum_{Y \subseteq X, \mid Y \mid =i}(-1)^{i+1}Y.pcnt\\
    &=Pr[\bigwedge_{z\in X}  0  <sup(z) <  \delta ]\\
    &= 1-Pr[\bigwedge_{z\in X}  0  <sup(z) <  \delta ]\\
    &\quad -\prod_{z\in X, i=1}^n (1-z_i)\\
    \end{split}
\end{align}

\end{theorem}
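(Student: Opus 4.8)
\emph{Proof idea.} The plan is to establish the three forms of $X^*.pcnt$ in Theorem~\ref{the:probbasicvefify} by first rewriting the checker as the probability of a union of single-item events, and then proceeding exactly as for $\{AB\}$ in Equations~\eqref{equ:probbasis}--\eqref{equ:pcnt}, inducting on $|X|$. Inside a fixed possible world $W\in W_T$ one has $sup(Y,W)\le sup(z,W)$ for every item $z$ of a non-empty $Y\subseteq X$, so some non-empty $t\subseteq X$ is frequent in $W$ iff some singleton $\{z\}\subseteq X$ is; hence, by Definition~\ref{def:probcountchecker} and summing $P_T(W)$ over the worlds where this holds, $X^*.pcnt=Pr\!\big[\bigvee_{z\in X}sup(z)\ge\delta\big]$. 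This already gives the base case $|X|=1$, where $X^*.pcnt=Pr[sup(x)\ge\delta]=x.pcnt=Pr(X)$, matching the singleton formulas stated before the theorem.

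For the alternating-sum form I would apply inclusion--exclusion for a probability measure to the union of the events $E_z=[sup(z)\ge\delta]$, $z\in X$: $Pr\!\big[\bigcup_{z\in X}E_z\big]=\sum_{i=1}^{|X|}\sum_{Y\subseteq X,|Y|=i}(-1)^{i+1}Pr\!\big[\bigcap_{z\in Y}E_z\big]$, and then identify each term $Pr\!\big[\bigcap_{z\in Y}E_z\big]$ with the frequent probability $Y.pcnt=Pr(Y)$ returned by the provider. This identification is the delicate step and, I expect, the main obstacle: the event-level inclusion--exclusion is automatic, but passing from $Pr\!\big[\bigwedge_{z\in Y}sup(z)\ge\delta\big]$ to the single aggregate $Pr[sup(Y)\ge\delta]$ requires the per-world equivalence $[sup(Y,W)\ge\delta]\Leftrightarrow\bigwedge_{z\in Y}[sup(z,W)\ge\delta]$, of which only ``$\Rightarrow$'' comes for free (from $sup(Y,W)\le sup(z,W)$). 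This is precisely where the count-based arguments behind Theorem~\ref{the:basicdeterverify} and Theorem~\ref{the:expbasicverify} do not carry over verbatim, so the reasoning has to be done on the random variable $sup(\cdot,W)$ --- or, absent a clean equivalence, one restricts to the checksets for which it holds and has the PWS miner report the corresponding aggregates. With that in place, an induction on $|X|$, removing one item of $X$ and using the two-item identity~\eqref{equ:probinclude} as the inductive step, yields the alternating sum for all $X$.

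For the last form I would take complements, $X^*.pcnt=1-Pr\!\big[\bigwedge_{z\in X}sup(z)<\delta\big]$, and split $\{sup(z)<\delta\}$ into the disjoint pieces $\{0<sup(z)<\delta\}$ and $\{sup(z)=0\}$, separating the all-positive branch $\bigwedge_{z\in X}[0<sup(z)<\delta]$ from the all-zero branch $\bigwedge_{z\in X}[sup(z)=0]$. The all-zero branch means that no transaction contains any item of $X$, so by the independence assumption its probability factorises to $\prod_{i=1}^{n}\prod_{z\in X}(1-z_i)$, which the data owner computes from a single $\mathcal{O}(n)$ scan of $T$ (the same argument that produced the $\prod_{i=1}^n(1-a_i)(1-b_i)$ term in the two-item case). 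Combining the branches with the complementation gives $X^*.pcnt=1-Pr\!\big[\bigwedge_{z\in X}0<sup(z)<\delta\big]-\prod_{i=1}^{n}\prod_{z\in X}(1-z_i)$, on the understanding --- the same caveat as above --- that the intermediate term $Pr\!\big[\bigwedge_{z\in X}0<sup(z)<\delta\big]$ is supplied by the miner as one aggregate together with the $Y.pcnt$'s, which is what keeps the whole check a single scan and, as in the deterministic and expected-support schemes, adds no work for the cloud party. I would close by noting that the alternating sum and the complement form provide two independent recomputations of $X^*.pcnt$, which is exactly what makes it a usable checker.
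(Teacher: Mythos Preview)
Your overall plan---inclusion--exclusion on one side, complementation on the other---is exactly the paper's two-line proof, only spelled out. You are, however, right to flag the ``delicate step'': inclusion--exclusion applied to $\bigcup_{z\in X}[sup(z)\ge\delta]$ produces the terms $Pr\big[\bigwedge_{z\in Y}sup(z)\ge\delta\big]$, and these are \emph{not} equal to $Y.pcnt=Pr[sup(Y)\ge\delta]$ in general. Only the direction $sup(Y,W)\ge\delta\Rightarrow\bigwedge_{z\in Y}sup(z,W)\ge\delta$ is automatic; the converse fails in any world where each item of $Y$ individually reaches support $\delta$ but the items do not co-occur in $\delta$ transactions simultaneously. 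The paper's proof simply asserts the identity (``the evidence is obvious \ldots\ inclusion--exclusion principle'') without confronting this, so your suspicion is well founded rather than over-cautious: the alternating sum as stated does not follow from inclusion--exclusion over the singleton events alone.

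There is a second gap, present in the paper and reproduced in your complementation step. When you split each $\{sup(z)<\delta\}$ into $\{sup(z)=0\}$ and $\{0<sup(z)<\delta\}$, the conjunction $\bigwedge_{z\in X}[sup(z)<\delta]$ decomposes into $2^{|X|}$ disjoint branches, not two. Besides the all-zero and the all-positive branches you retain, every mixed branch---some items with support $0$, the others with support in $(0,\delta)$---also contributes. Already for $X=\{A,B\}$ the terms $Pr[sup(A)=0\wedge 0<sup(B)<\delta]$ and $Pr[0<sup(A)<\delta\wedge sup(B)=0]$ are silently dropped in Equation~\eqref{equ:probinverse}, and hence in the final identity. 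So the last displayed equality of the theorem does not hold as written for $|X|\ge 2$, and neither your argument nor the paper's ``inverse probability'' sentence closes this; a correct version would either keep all $2^{|X|}$ branches or have the miner return the single aggregate $Pr\big[\bigwedge_{z\in X}sup(z)<\delta\big]$ directly.
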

\begin{proof}
  The evidence is obvious. On one side, the general $\{X\}^*.pcnt$ can be computed by the inclusion-exclusion principle as $\sum_{i=1}^{\mid X \mid}\sum_{Y \subseteq X, \mid Y \mid =i}(-1)^{i+1}Y.pcnt$. On the other side, it can be represented by the inverse probability as $1-Pr[\bigwedge_{z\in X}  0  <sup(z) <  \delta ] -\prod_{z\in X, i=1}^n (1-z_i)$, which is the general form of Equation~\eqref{equ:pcnt}.
\end{proof}

Now we note that part $Pr[\bigwedge_{z\in X}  0  <sup(z) <  \delta ]$ is somewhat confusing and difficult for us to compute. Thus here we should answer two questions:
\begin{enumerate}
  \item \textbf{whether the computation of $Pr[\bigwedge_{z\in X}  0  <sup(z) <  \delta ]$ is necessary; }
  \item \textbf{If so, how can we design such a efficient way for cloud party to resolve it efficiently.}
\end{enumerate}

First, we give the example with $X=AB$, $\delta=1,2$  in Figure~\ref{fig:probnecessity} to illustrate  the necessity of $Pr[\bigwedge_{z\in X}  0  <sup(z) <  \delta ]$ with the possible worlds possibilities points. In Figure~\ref{fig:probnecessity}(a)(b), the horizontal and vertical axes represent the support of the item of $B$ and $A$, then each coordinate point with one pair of support value represents the sum of the possible world possibilities, e.g., the point $(0,0)$ represent $Pr[sup(A)=0 \land sup(B)=0]$, and $(x,y)$ represents $Pr[sup(A)=y \land sup(B)=x]$.    As in Figure~\ref{fig:probnecessity}(a), the frequent threshold $\delta=1$, then $Pr[sup(A)=0 \land sup(B)=0]$ indicated in red, the part $Pr[sup(A) \geq  \delta \lor sup(B) \geq  \delta ]$ can be indicated in the set of green points, we note that when $\delta=1$ $Pr[0  <sup(A) <  \delta \land 0<sup(B) <  \delta ]=0$, thus not existing in Figure~\ref{fig:probnecessity}.  However,  when threshold $\delta =2 \neq 1$ in Figure~\ref{fig:probnecessity}(b), $Pr[0  <sup(A) <  \delta \land 0<sup(B) <  \delta ]=0$  can be indicated in coordinate points in yellow.

\begin{figure}[htb]
  \centering
  \includegraphics[width=.4\textwidth]{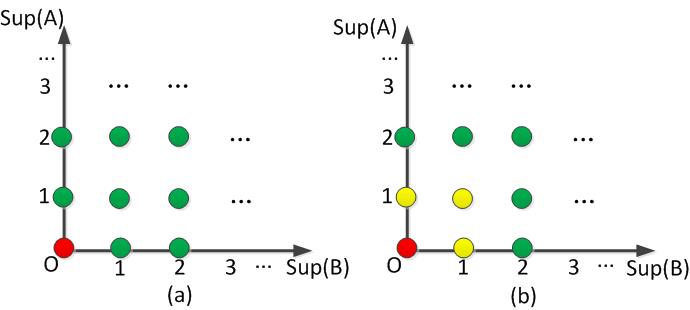}\\
  \caption{(a)$\lambda=0$ when $\delta=1$; (b)$\lambda \neq 0$ when $\delta=2$}\label{fig:probnecessity}
\end{figure}

Therefore, we can get the conclusion that once threshold $\delta > 1$, computation of $Pr[\bigwedge_{z\in X}  0  <sup(z) <  \delta ]$ in Therorem~\ref{the:probbasicvefify} is necessary. In other words, if the cloud can't complete the computation of $Pr[\bigwedge_{z\in X}  0  <sup(z) <  \delta ]$, the verification in Equation~\eqref{equ:pcnt} can't be established.

Then we discuss the possible efficient resolution of the computation on $Pr[\bigwedge_{z\in X}  0  <sup(z) <  \delta ]$. To our best of knowledge, existing popular efficient  PWS based UFI mining methods \cite{bernecker2009probabilistic,sun2010mining} all utilize the pruning mechanism to avoid the computation the probabilities of the unfrequent UFIs, in other words,  they only care and compute the probability value of the UFIs, not the unfrequent itemsets  in the part $Pr[\bigwedge_{z\in X}  0  <sup(z) <  \delta ]$. Thus we should design some new extra method for its computation.
We explore and analyze the problem from the basis,  $X$ as a single item composed itemset.

\textbf{\emph{Basis. }} $X$ as a single item composed itemset. In this case, the object $Pr[\bigwedge_{z\in X}  0  <sup(z) <  \delta ]$ will be  simplified to $Pr[ 0  <sup(X) <  \delta ]$, which means the total sum  of the unfrequent possible world probabilities containing $X$.  Here we can follow the similar dynamic programming way in UApriori\cite{bernecker2009probabilistic} to design the method. Similar to the notations in \cite{bernecker2009probabilistic}, where the notation $P_{\geq i,j}(X)$  was used to denotes the probability that itemset $X$ appears at least $i$ times among the first $j$ transactions in the given uncertain database with $N$ transactions, here we use notation $P_{< i,j}(X)$ to denotes the probability that itemset $X$ appears less than $i$ times among the first $j$ transactions. Therefore, the recursive relationship is defined as follows:
$$P_{< i,j}(X)=P_{< i-1,j}(X)\cdot P(X \subseteq t_j)+ P_{< i,j}(X)\cdot P(X \nsubseteq t_j)$$
where the boundary case:
\begin{equation*}
\begin{cases}
P_{< i,j}(X)=1, \qquad j<i\\
P_{< 0,j}(X)=0, \qquad 0 \leq j \leq N\\
\end{cases}
\end{equation*}

Therefore, similar to the object $P_{\geq \delta, N}(X)$ in \cite{bernecker2009probabilistic}, here our object can be formalized as $P_{< \delta, N}(X)$. Then we can compute it in the dynamic programming way depicted in Figure~\ref{fig:probdynamic}, where we use the notation $P_{i,j}(X)$ for the simplification of  $P_{< i,j}(X)$ above.
\begin{figure}[htb]
  \centering
  \includegraphics[width=.4\textwidth]{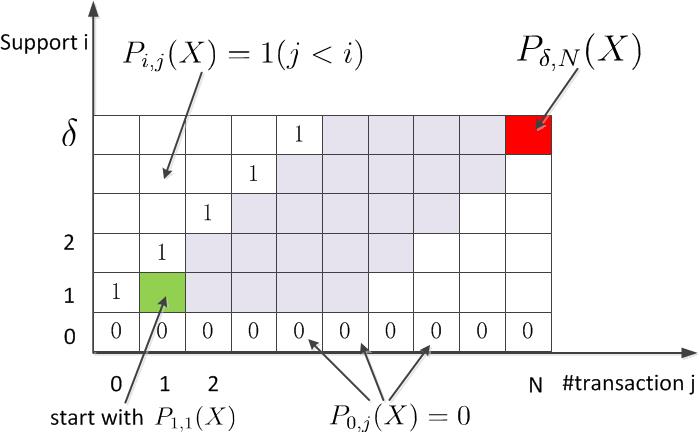}\\
  \caption{Dynamic computation process for  $P_{< \delta, N}(X)$}\label{fig:probdynamic}
\end{figure}

In Figure~\ref{fig:probdynamic}, the computation process is started from the green grid as $P_{1,1}(X)$ and ended with the right top red grid $P_{\delta, N}(X)$. According to the analysis in \cite{bernecker2009probabilistic}, the computation process  requires at most $O(\delta \cdot N)$ time complexity and $O(N)$ space complexity.

\textbf{\emph{General.}} $X$ as a multiple item composed itemset, $|X| >1$. In this case, the object $Pr[\bigwedge_{z\in X}  0  <sup(z) <  \delta ]$ can be notated as:
$$P_{\underbrace{<i,\dots,<i}_{|X|}; j}({z_1,\dots,z_{|X|}})$$
where $z_1,\dots,z_{|X|}$ is single item and $z_1,\dots,z_{|X|} \in X$.
Thus the recursive relationship is modified as follows:
\begin{align}
\begin{split}
&P_{\underbrace{<i,\dots,<i}_{|X|}; j}({z_1,\dots,z_{|X|}})\\
&=P_{\underbrace{<i-1,\dots,<i}_{|X|}; j}({z_1,\dots,z_{|X|}}) \cdot P(z_1 \subseteq t_j)\\
 & + P_{\underbrace{<i,\dots,<i}_{|X|}; j}({z_1,\dots,z_{|X|}}) \cdot P(z_1 \nsubseteq t_j)\\
 & \qquad \cdots \\
 & +P_{\underbrace{<i,\dots,<i-1}_{|X|}; j}({z_1,\dots,z_{|X|}}) \cdot P(z_{|X|} \subseteq t_j)\\
 & + P_{\underbrace{<i,\dots,<i}_{|X|}; j}({z_1,\dots,z_{|X|}}) \cdot P(z_{|X|} \nsubseteq t_j)\\
    \end{split}
\end{align}

where the boundary cases turn into:
\begin{equation*}
\begin{cases}
P_{<i_1,\dots,<i_{|X|}; j}({z_1,\dots,z_{|X|}})=1, \qquad j< min\{i_1, \dots, i_{|X|}\}\\
P_{\underbrace{<0,\dots,<0}_{|X|}; j}({z_1,\dots,z_{|X|}})=0, \qquad 0 \leq j \leq N\\
\end{cases}
\end{equation*}

We can build dynamic programming based computation process similar to Figure~\ref{fig:probdynamic} as general high-dimensional case of the one-dimensional case in Figure~\ref{fig:probdynamic}. We note that such general  computation process  requires at most $O(\delta^{|X|} \cdot N)$ time complexity and $O(N^{|X|})$ space complexity. Here we neglect the details due to the space limitation.

We note that such basic  verification mechanism  illustrated above can be used to verify the random faults caused by a honest cloud, and the stupid cloud attack. However, once a smart cloud aware of this mechanism as before, and returns the elaborated fake values as the results, the basic verification scheme will also fail. Therefore, we doubt why we can build some enhanced verification mechanism to resolve this problem and resist such smart cloud attack. Below we will explore and discuss it in details.

\subsection{Enhanced Checkers with Private Weights}
Now we try to explore the potential similar private weights based  enhancement  for PWS case below. After we assign the weights and try to associate the possibilities of the itemset between original database $T$ and virtual transformed $T'$, unluckily, we find it difficult due to the properties of the PWS. That is to say, it is hard to find some simple mathematic relationship between $X.pcnt_{T'}$ and $X.pcnt_{T}$:
\begin{equation}
\label{equ:virtualtrans}
  X.pcnt_{T'}=\phi(w_z)_{z \in X}X.pcnt_{T}
\end{equation}

where function $\phi$ is the object function with the weights $w_z, z \in X$ as its operands. The private weights enhancement schemes holds in Theorem~\ref{the:privatedeterverify}, \ref{the:expfirstverify}, and \ref{the:expsecondverify} with some specific qualified function $\phi$ due to the simple property of the sum of support and expected support. And the good representation property in Equation~\eqref{equ:virtualtrans} is the crux of the success of the virtual transformation, in other words, we don't need actual transformation from $T$ to $T'$,  instead  we can  avoid the transformation and compute the values $X.pcnt_{T'}$ from $X.pcnt_{T}$ according to  Equation~\eqref{equ:virtualtrans}.

We illustrated the problem with $X=A$, where $A$ is a single item, then in the basic verification scheme, if the item $A$ is UFI, then its frequent value will be computed and returned as the form below:
$$ A.pcnt_{T}=Pr[sup(A) \geq \delta] =\sum_{L = \delta}^N Pr[sup(A)=L]=\sum_{L = \delta}^N p_L$$

where $p_L$ denotes the sum of the possible worlds' value, in which the support of the item $A$ is $L$. We should mention that the cloud will return the sum $A.pcnt_{T}$ instead of the $p_L$s.

Now if we apply the similar weighted enhancement verification, the item $A$ assigned by the weight $w_A$, then
$$ A.pcnt_{T'} =\sum_{L = \delta}^N w_A^LPr[sup(A)=L]=\sum_{L = \delta}^N w_A^Lp_L$$

Here we compare the forms of the two equations above, the relationship with the weight $w_A$ is hard to obtain:
\begin{equation*}
  A.pcnt_{T'}=\phi(w_A)A.pcnt_{T}
\end{equation*}

Thus in order to finish the enhanced verification scheme, we have to compute the $p_L$ value then perform the transformation $w_A^Lp_L$, even we can get the closed form of $\sum_{L = \delta}^N p_L$ or $\sum_{L = \delta}^N w_A^Lp_L$ if luck enough. 
But in fact, unfortunately,  it is hard to obtain the  closed form of $\sum_{L = \delta}^N p_L$ or $\sum_{L = \delta}^N w_A^Lp_L$ with better complexity compared with the original  exponential complexity problem if using brute-force enumeration.

Therefore, we can get the conclusion in this section as \textbf{\emph{Remarks: in PWS based outsourcing UFI mining scenario, there exists effective basic checker mechanism, and its enhanced private weights based resolutions can't be built efficiently enough.
Though it seems a little upsetting, such basic verification mechanism can be used to verify the random faults caused by a honest cloud, which is still useful for us. }}
Besides, we can still find some more good news from the case of approximate UFI mining method in the next section, which will act as the bridge of the expected support based UFI definition and PWS based UFI definition.

\section{ Verification for Approximate UFI mining}
Due to complexity of the exact probabilistic frequent algorithms, when uncertain databases are large enough,   the itemsets' support follow Poisson Binomial distribution, so that the approximate algorithms can obtain the approximate frequent probability with high quality by only acquiring the basic statistic itemset information, including support expectation, variance, with $ O(N)$ computation cost and reach satisfactory result.  Specifically, in \cite{wang2010accelerating}, the authors proposed the Poisson distribution-based approximate probabilistic frequent itemset mining algorithm, called PDUApriori.
However, this algorithm only approximately determines whether an itemset is probabilistic frequent, and cannot return accurate frequent probability values.
The Normal distribution-based approximate probabilistic frequent itemset mining algorithm, NDUApriori, was proposed in \cite{calders2010approximation}. 
However, it is impractical to large sparse uncertain databases since it employs the Apriori framework according to \cite{tong2012mining}.
In order to solve this problem, \cite{tong2012mining} proposed  NDUH-Mine to integrates the framework of UH-Mine and the Normal distribution approximation in order to achieve a win-win partnership in sparse uncertain databases.

It is claimed in \cite{tong2012mining} that  the Normal distribution-based approximation algorithms build a bridge between the two  different UFI definitions, expected support-based frequent itemsets and the probabilistic frequent itemsets definition. In detail, according to \cite{calders2010approximation},
$$P(sup(X) \geq \delta )\approx \Phi(\frac{\delta-0.5-esup(X)}{\sqrt{Var(X)}})$$
where $\Phi$ is the cumulative distribution function of standard Normal distribution, and $Var(X)$ is the variance of the support of $X$. Though there is no outsourcing  approximate UFI mining scheme, we can see that the crux of the outsourcing approximate UFI mining scheme is the computation of the expected support $esup(X)$ and the variance  $Var(X)$. Thus the verification method should complete the verification of right computation of $esup(X)$ and $Var(X)$ in the remote cloud. Below we discuss the details.

Obviously, the verification of $esup(X)$ has been completed in Section~\ref{sec:exp} in expected support UFI definition case. The private weights based enhancement can be utilized well. Below we study the verification of  $Var(X)$. Due to the relationship between $esup(X)$ and $Var(X)$,
\begin{equation}
\label{equ:variance}
  Var(X)=esup(X^2)-esup(X)^2
\end{equation}

Here we should mention that part $esup(X^2)$ represents the expected support of the itemset when the existence probability squares from the original values, e.g., the original $P(a \in t_i)=a_i$ changes into $ a_i^2$. Thus once we get the returned values $esup(X)$ and $Var(X)$. We first finish the verification of $esup(X)$, then  verify the $esup(X^2)$ on the transformed database $T': T \to T^2$ using the same methods in Section~\ref{sec:exp}.

It is worth mentioning that such transformation $T': T \to T^2$ is necessary because we can't establish the required function $\phi$ in Equation~\eqref{equ:virtualtrans} even in expected support UFI definition 
, which is the crux of the free actual transformation. The private weights based enhancement also applies here. Due to the verification of $esup(X)$ and $Var(X)$ is combined with the $P(sup(X) \geq \delta )= X.pcnt$, thus the approximate computation result can be ensured by some accuracy.

The process can be illustrated as in Algorithm~\ref{alg:approximation}.
\begin{algorithm}
\SetKwInOut{Input}{Input}\SetKwInOut{Output}{Output}
\SetCommentSty{it}
\begin{enumerate}
  \item[(1)] verify $esup(X)$ on original database $T$ according to the methods in Section~\ref{sec:exp}\;
  \item[(2)] transform database $T': T \to T^2$ and prepare verification values $esup(X^2)=Var(X)-esup(X)^2$\;
  \item[(3)]  verify $esup(X^2)$ on transformed database $T'$ according to the methods in Section~\ref{sec:exp}\;
\end{enumerate}
\caption{ verification steps for outsourcing approximate UFI mining  with returned $Var(X)$, $esup(X)$}
\label{alg:approximation}
\end{algorithm}

\section{Analysis}
We present the comparisons and analysis on the schemes proposed in this paper  in Table~\ref{tab:analysis}. All the cost values are compared with the deterministic basic scheme in 1st line.  We note that in expected support-based outsourcing FI mining scenario, there exists effective basic checker mechanism, and its enhanced private weights based resolutions can also be built effectively, which can detect all the cloud attacks including  Random Fault, Smart/Stupid attack. In  PWS based outsourcing UFI mining scenario, there exists effective basic checker mechanism, however, its enhanced private weights based resolutions can't be built efficiently enough. Though it seems a little upsetting, such basic verification mechanism can be used to verify the random faults caused by a honest cloud, which is still useful for us. After that, we can note the further conclusion that  the scenario of existing approximation UFP mining, where we can see that our technique can provide good probabilistic guarantees about the correctness of the verification with some extra cost effectively. 
\begin{table*}
\center
  \begin{tabular}{|c|c|c|c|c|}
  \hline
    Scenario & Level & Extra cost & Random Fault/ & Smart  \\
    & & & Stupid cloud attack & cloud attack \\
    \hline
  Deterministic  & basic & - & Y  & N \\
   & weight &  - & Y & Y \\
    \hline
   Expected UFI  & basic & - & Y & N \\
  & weight1 &  - & Y & Y \\
    & weight2 &  - & Y & Y \\
  \hline
   PWS UFI  & basic &  time :$O(\delta^{|X|} \cdot N)$& Y & N \\
   & & space:  $O(N^{|X|})$ & & \\
     \hline
  Approximate  & b/w & time :$O(N\cdot |I|)$ & Y & Y \\
   \hline
\end{tabular}
 \label{tab:analysis}
\end{table*}

\section{Clonclusion}
In this paper, we extend the existing outsourcing FI mining work to uncertain area w.r.t. the two popular UFI definition criteria and the approximate UFI mining methods.  Specifically, we design the basic checker mechanism verification for expected support UFI definition to verify the random fault, then propose two enhanced schemes based on the private random weights mechanism for the sake of smart cloud attack verification.
Then we explore and design the basic/enhanced verification scheme for PWS based UFI definition, which is able to verify random fault/smart attack. We address the efficiency and  difficulty of the enhanced scheme for PWS case.
As the bridge of the two different UFI  definition,  we further discuss the case of  approximation UFI mining  verification, where we  reduce the verification to the expected support verification and  provide good probabilistic guarantees about the correctness of  verification. Finally, we present the comparisons and analysis on the schemes proposed in this paper.

\section*{Acknowledgment}
This work was supported in part by the National Natural Science Foundation of China (No.61170233, No.61232018, No.61272472, No.61202404, No.61272317)
and China Postdoctoral Science Foundation (No.2011M501060).


\bibliographystyle{plain}
\bibliography{arXiv_main}

\end{document}